\renewcommand{\cite}{\citep}
\newtheorem{theorem}{Theorem}[section]
\newtheorem{lemma}[theorem]{Lemma}
\newenvironment{proof}[1][Proof]{\begin{trivlist}
\item[\hskip \labelsep {\bfseries #1}]}{\end{trivlist}}
\newenvironment{definition}[1][Definition]{\begin{trivlist}
\item[\hskip \labelsep {\bfseries #1}]}{\end{trivlist}}
\newenvironment{example}[1][Example]{\begin{trivlist}
\item[\hskip \labelsep {\bfseries #1}]}{\end{trivlist}}
\newcommand{\qed}{\nobreak \ifvmode \relax \else
      \ifdim\lastskip<1.5em \hskip-\lastskip
      \hskip1.5em plus0em minus0.5em \fi \nobreak
      \vrule height0.75em width0.5em depth0.25em\fi}
\begin{document}

\title{Combining the Shortest Paths and the Bottleneck Paths Problems}
\author{Tong-Wook Shinn$^1$
\and 
Tadao Takaoka$^2$}
\affiliation{Department of Computer Science and Software Engineering \\
University of Canterbury \\
Christchurch, New Zealand \\
$^1$ Email:~{\tt tad@cosc.canterbury.ac.nz} \\
$^2$ Email:~{\tt tong-wook.shinn@pg.canterbury.ac.nz}}

\maketitle

\toappear{This research was supported by the EU/NZ Joint Project, Optimization and its Applications in Learning and Industry (OptALI).}
\toappearstandard





\begin{abstract}
We combine the well known Shortest Paths (SP) problem and the Bottleneck Paths (BP) problem to introduce a new problem called the Shortest Paths for All Flows (SP-AF) problem that has relevance in real life applications. We first solve the Single Source Shortest Paths for All Flows (SSSP-AF) problem on directed graphs with unit edge costs in $O(mn)$ worst case time bound. We then present two algorithms to solve SSSP-AF on directed graphs with integer edge costs bounded by $c$ in $O(m^2 + nc)$ and $O(m^2 + mn\log{(\frac{c}{m})})$ time bounds. Finally we extend our algorithms for the SSSP-AF problem to solve the All Pairs Shortest Paths for All Flows (APSP-AF) problem in $O(m^{2}n + nc)$ and $O(m^{2}n + mn^{2}\log{(\frac{c}{mn})})$ time bounds. All algorithms presented in this paper are practical for implementation.
\end{abstract}
\vspace{.1in}

\noindent {\em Keywords:} Shortest Paths, SP, Bottleneck Paths, BP, Single Source Shortest Paths, SSSP, All Pairs Shortest Paths, APSP


\section{Introduction}
The problem of finding the shortest paths between pairs of vertices on a graph is one of the most extensively studied problems in algorithm research. This problem is formally known as the Shortest Paths (SP) problem and is often categorized into the Single Source Shortest Paths (SSSP) problem and the All Pairs Shortest Paths (APSP) problem. As the names suggest, the SSSP problem is to compute the shortest paths from one single source vertex to all other vertices on the graph, and the APSP problem is to compute the shortest paths between all possible pairs on the graph. The most well known algorithm for solving the SSSP problem is the algorithm by \citet{Dijkstra} that runs in $O(n^2)$ time, where $n$ is the number of vertices in a graph. This algorithm can be enhanced with a priority queue, and if the Fibonacci heap \cite{FT}  is used to implement the priority queue then the time complexity becomes $O(m + n\log{n})$ where $m$ is the number of edges in the graph. If edge costs are integers bounded by $c$, then the SSSP problem can be solved in $O(m + n\log{\log{c}})$ time \cite{Thorup} using a complex priority queue for integers. For solving the APSP problem, the $O(n^{3})$ algorithm by \citet{Floyd} is the most well known.

If edges have capacities, the bottleneck of a path is the minimum capacity out of all edge capacities on the path. In other words, the bottleneck of a path from vertex $u$ to vertex $v$ is the maximum amount of flow that can be pushed from $u$ to $v$ down the path. Finding the paths that give the maximum bottlenecks between pairs of vertices is also a well studied problem and is formally known as the Bottleneck Paths (BP) problem. The Single Source Bottleneck Paths (SSBP) problem can be solved with a simple modification to the algorithm by \citet{Dijkstra}. For an undirected graph the All Pairs Bottleneck Paths (APBP) problem can be solved in $O(n^{2})$, which is optimal \cite{Hu}.

The SP and BP problems are concerned with finding the minimum or the maximum possible values. However, the shortest path may not give the biggest bottleneck, and the path that gives the maximum bottleneck may not be the shortest path. If the flow demand from a vertex to another vertex is known, then it is clearly beneficial to find the shortest path that can fully accommodate that flow demand. Thus we combine the SP and BP problems to compute the shortest paths for all possible flow amounts. We call this problem the Shortest Paths for All Flows (SP-AF) problem. As is common in graph paths problems, we divide the SP-AF problem into the Single Source Shortest Paths for All Flows (SSSP-AF) problem, and the All Pairs Shortest Paths for All Flows (APSP-AF) problem.

There are many obvious real life applications for this new problem, such as routing in computer networks, transportation, logistics, and even planning for emergency evacuations. The city of Christchurch has recently been hit by a series of strong earthquakes, most notably in September 2010 and in February 2011. If we consider hundreds of people evacuating from a building in such emergencies, if the amount of people (flow) can be predetermined, we can find the shortest route for all people in various locations around the building to their respective evacuation points such that the flow of people can be fully accommodated. If the flow amounts (of people) are not considered in calculating the evacuation routes, congestion may occur at various points in the building that could lead to serious accidents.

In this paper we present algorithms for solving the SSSP-AF and APSP-AF problems on directed graphs with non-negative integer edge costs and real edge capacities that are faster than the straightforward methods. We first give an algorithm to solve SSSP-AF on graphs with unit edge costs in $O(mn)$ time. We then present two algorithms for solving SSSP-AF on graphs with non-negative integer edge costs of at most $c$ in $O(m^2 + nc)$ and $O(m^2 + mn\log{(\frac{c}{n})})$ time bounds. Finally we show that the main concepts behind the SSSP-AF algorithms can be extended to solve APSP-AF in $O(m^{2}n + nc)$ and $O(m^{2}n + mn^{2}\log{(\frac{c}{mn})})$ time bounds.


\section{Preliminaries}
\label{sec:prem}
Let $G=\{V,E\}$ be a directed graph with non-negative integer edge costs bounded by $c$ and non-negative real capacities. Let $n = |V|$ and $m = |E|$. Vertices are given by integers such that $\{1,2,3,...,n\} \in V$. Let $(i,j)$ be the edge from vertex $i$ to vertex $j$. Let $cost(i,j)$ denote the edge cost (or distance) and let $cap(i,j)$ denote the edge capacity. Let $OUT(v)$ be the set of vertices that are directly reachable from $v$, and $IN(v)$ be the set of vertices that can directly reach $v$. There can be up to $m$ distinct capacities if all edge capacities are unique. We refer to the distinct capacities as maximal flows. Then the SP-AF problem is to solve the SP problem for all maximal flows.

For all our algorithms only comparison operations are performed on the maximal flow values. Therefore the real values of maximal flows can be mapped to integer values without any loss of generality by first sorting the maximal flows in increasing order then assigning incremental integer values starting from 1. This allows us to use maximal flow values as indexes of arrays in our algorithms.

We use the computational model that allows comparison-addition operations and random access with $O(\log{n})$ bits to be performed in $O(1)$ time.


\section{Single Source Shortest Paths for All Flows}
\label{sec:ssspaf}
We first consider solving the SP-AF problem from a source vertex $s$ to all other vertices in $G$. Initially we compute just the distances rather than actual paths then  later show that the paths information can be computed in the same time bound. That is, we first solve the Single Source Shortest Distances for All Flows (SSSD-AF) problem, then show that the algorithm can be extended to solve the SSSP-AF problem with no increase in the worst case time complexity.

The SSSD-AF problem can be defined as the problem of computing the set of all $(d,f)$ pairs for each destination vertex, where $d$ is the shortest distance from $s$ and $f$ is the maximal flow value. Let $S[v]$ be the set of $(d,f)$ pairs for the destination vertex $v$. Suppose $(d,f)$ and $(d',f')$ both exist in $S[v]$ such that $d < d'$. Then we keep $(d',f')$ \emph{iff} $f < f'$, that is, a longer path is only \emph{useful} if it can accommodate a greater flow. If $d = d'$, we keep the pair that gives us the greater flow.

\begin{figure}
\def\svgwidth{230pt}
\begin{center}
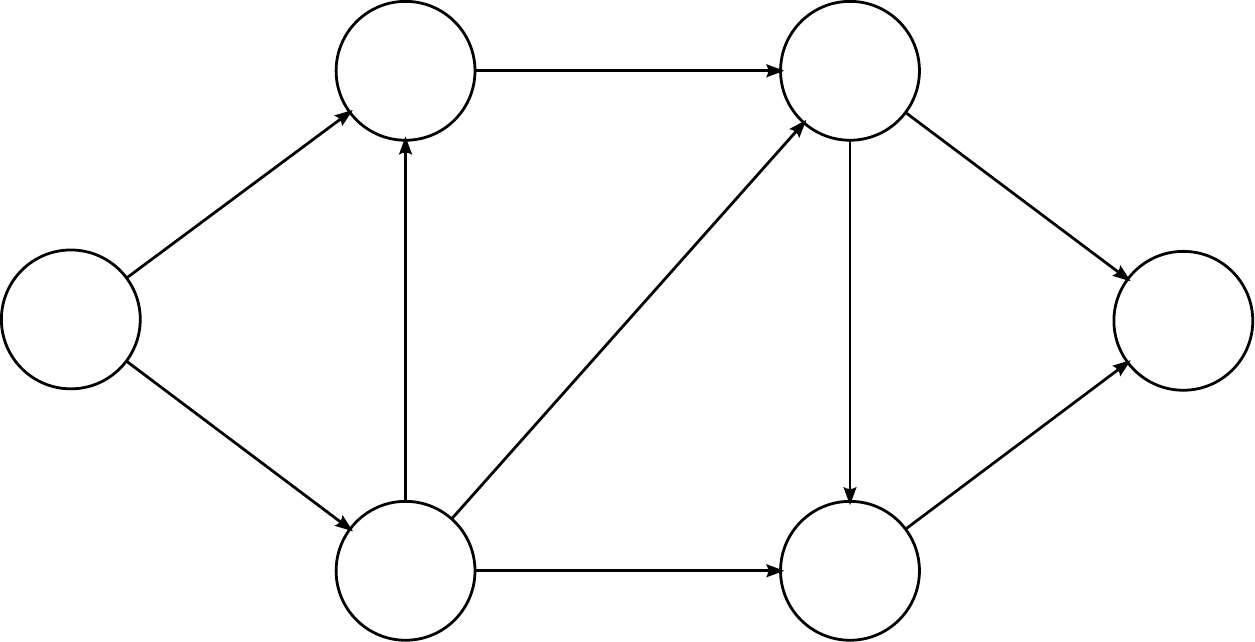
\end{center}
\caption{An example of a directed graph with $n=6$, $m=9$ and $c=3$. The first number in the parenthesis is the edge cost and the second number is the edge capacity.}
\label{fig:graph}
\end{figure}

\begin{example}
Solving SSSP-AF on the example graph in Figure \ref{fig:graph} with $s=1$ would result in $S[6] = \{(4,2),(5,3),(6,5),(8,6)\}$.
\end{example}

The straightforward method for solving the SSSP-AF problem is to iterate through each maximal flow $f_{i}$ and solving the SSSP problem for the sub-graph that only have edges with capacities $f_{i}$ or greater. On graphs with unit edge costs SSSP can be solved in $O(m)$ time with a simple breadth-first-search (BFS), resulting in $O(m^2)$ time bound for solving the SSSP-AF problem. On graphs with integer edge costs bounded by $c$, we can use the algorithm by \citet{Thorup} to solve SSSP-AF in $O(m^2 + mn\log{\log{c}})$ time.

Note that SSSP-AF cannot be solved with a simple decremental algorithm, where edges are removed one by one in decreasing order of capacity then the connectivity of all affected vertices are checked. This method fails because edges with larger capacities may later be required to provide shorter paths for smaller maximal flows.

\subsection{Unit edge costs}
\label{sec:ssspaf:unit}
Algorithm \ref{alg:unit} solves SSSP-AF in $O(mn)$ time by utilizing the fact that even though there are $O(m)$ maximal flows, there can only be $O(n)$ paths with unique path costs for each destination vertex. As noted earlier, if multiple paths from $s$ to $v$ exists with equal path costs, we only need to keep the path that can accommodate the biggest maximal flow out of those paths. Thus for graphs with unit edge costs, even with $m$ maximal flows, the size of $S[v]$ is $O(n)$ for each $v$.

Let $B[v]$ be the bottleneck of a path from $s$ to vertex $v$. Let $D[v]$ be a possible distance from $s$ to $v$. Let $Q[i]$ be a set of vertices that may be added to Spanning Tree (SPT) at distance $i$, such that $1 \leq i \leq n - 1$, i.e. one set of vertices exists for each possible distance from $s$.

The algorithm starts off with just $s$ in the SPT and makes incremental changes to the SPT as we iterate through the maximal flow values in increasing order. The SPT is a persistent data structure and we do not build it up from scratch in each iteration. In summary, as we iterate through each maximal flow we cut nodes from the SPT that cannot accommodate the maximal flow and add the nodes back to the SPT at the shortest possible distance from $s$ (the root).

\begin{algorithm}
\caption{Solve SSSD-AF on graphs with unit edge costs in $O(mn)$ time}
\label{alg:unit}
\begin{algorithmic}[1]
\algnotext{EndFor}
\algnotext{EndIf}
\algnotext{EndWhile}
\ForAll{$v \in V$}
	\State{$B[v] \leftarrow 0, D[v] \leftarrow 0$}
\EndFor
\State{$B[s] \leftarrow \infty$, $SPT \leftarrow s$}
\ForAll{maximal flows $f$ in increasing order}
	\ForAll{$v \in V$ such that $B[v] < f$}
		\If {$v$ is in SPT}
			\State{Cut $v$ from $SPT$}
		\EndIf
		\State{$D[v] \leftarrow D[v] + 1$}\label{line:inc1}
		\State{Add $v$ to $Q[D[v]]$}
	\EndFor
	\For{$i \leftarrow 1$ to $n-1$}
		\While{$Q[i]$ is not empty}
			\State{Remove $v$ from $Q[i]$}
			\ForAll{$IN(v)$ as $u$}
				\If{$D[u] = D[v] - 1$}\label{line:D}
					\State{$b \leftarrow \Call{min}{cap(u,v),B[u]}$}
					\If{$b > B[v]$}\label{line:B}
						\State{$B[v] \leftarrow b$}
						\State{Add $v$ to $SPT$, $u$ as parent}
					\EndIf
				\EndIf
			\EndFor
			\If{$v$ is in $SPT$}
				\State{Append (D[v],B[v]) to $S[v]$}\label{line:append}
			\Else
				\State{$D[v] \leftarrow D[v] + 1$}\label{line:inc2}
				\State{Add $v$ to $Q[D[v]]$}
			\EndIf
		\EndWhile
	\EndFor
\EndFor
\end{algorithmic}
\end{algorithm}

\begin{lemma}
Algorithm \ref{alg:unit} correctly solves SSSD-AF on directed graphs with unit edge costs.
\end{lemma}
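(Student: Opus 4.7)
The plan is to induct on iterations of the outer loop, which processes the distinct maximal flows $f_1 < f_2 < \cdots < f_K$ in increasing order. Let $G_k$ be the subgraph of $G$ consisting of the edges with capacity $\geq f_k$. The inductive invariant to maintain after iteration $k$ is: for every $v$ reachable from $s$ in $G_k$, $v$ lies in the SPT with $D[v]$ equal to the shortest $s$-$v$ distance in $G_k$ and $B[v]$ equal to the maximum bottleneck over all such shortest paths; every other $v$ is outside the SPT; and $S[v]$ contains exactly the Pareto-optimal pairs whose flow value is at most $f_k$.

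The base case ($k=1$) reduces to a BFS from $s$ in $G_{f_1}=G$: every non-source vertex is initially cut and placed in $Q[1]$, the predicate $D[u]=D[v]-1$ isolates the current frontier, re-queueing at $D[v]+1$ advances to the next BFS layer, and maximising $\min(cap(u,v),B[u])$ over qualifying predecessors yields the correct max-bottleneck-over-shortest-paths value.

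For the inductive step, the uncut vertices already satisfy the invariant for $G_k$ because their SPT paths use only edges of capacity $\geq B[v] \geq f_k$ (hence lie in $G_k$) and the containment $G_k \subseteq G_{k-1}$ rules out a shorter $G_k$-path. For the cut vertices, the inner loop re-attaches them layer by layer; an inner induction on $i$ shows that when $v \in Q[i]$ is processed, $D[u] = i-1$ holds exactly for the vertices already settled at their true $G_k$-distance $i-1$: uncut $u$ keep their correct old $D$-value, cut $u$ re-attached via some $Q[j]$ with $j<i$ have $D[u]=j$, and still-pending $u$ have $D[u]\ge i$ because $Q[1],\dots,Q[i-1]$ are drained before $Q[i]$ is touched. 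Since $B[v]<f_k$ forces every $G_{k-1}$-shortest path to $v$ outside $G_k$, the true $G_k$-distance of $v$ is strictly greater than its old $D[v]$, which justifies the initial increment.

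The delicate step is ruling out a spurious attachment via a predecessor $u$ with $cap(u,v)=c<f_k$. Here the key observation is that $B[v]$ is never reset and that edge capacities are themselves maximal flows: since $u$ is reachable in $G_{f_k}\subseteq G_c$ and the edge $(u,v)$ lies in $G_c$, $v$ is reachable in $G_c$, so the inductive hypothesis applied at the iteration for $c$ (which has already occurred, as $c<f_k$) guarantees that $B[v]\geq c$ at that moment, and $B[v]$ is monotone non-decreasing thereafter. Consequently $b=\min(cap(u,v),B[u])=c$ fails the strict test $b>B[v]$, so only predecessors with $cap(u,v)\geq f_k$ can actually update $B[v]$, producing exactly the maximum bottleneck over $G_k$-shortest paths. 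Finally, invariant on $S[v]$ follows from the strict monotonicity of both $D[v]$ and $B[v]$ across successive appends, which preserves Pareto-optimality, together with the observation that every Pareto pair $(d_i,b_i)$ is captured at the first iteration whose flow exceeds the preceding Pareto bottleneck.
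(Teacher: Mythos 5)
Your argument is correct and ultimately tracks the same algorithmic intuition as the paper's proof, but it is organised very differently and is considerably more rigorous. The paper's proof is a short informal restatement of the algorithm's intent (scanning $Q[1]$ through $Q[n-1]$ yields minimum distances, inspecting $IN(v)$ maximises the bottleneck, monotone $D[v]$ prevents duplicate distances); it carries no explicit invariant. You instead formulate an invariant in terms of the capacity-threshold subgraphs $G_k$ and prove it by an outer induction on maximal flows with an inner induction on distance layers. This buys you honest treatment of three points the paper passes over in silence: (i) why incrementing $D[v]$ immediately upon cutting is sound (every $G_{k-1}$-shortest path to a cut vertex has bottleneck below $f_k$, so the $G_k$-distance strictly increases); (ii) why an in-neighbour $u$ with $cap(u,v) < f_k$ can never cause a spurious attachment, which is the genuinely delicate point since the algorithm never explicitly filters edges by capacity --- your observation that $B[v]$ is monotone and already at least $cap(u,v)$ by the earlier iteration for that capacity value, so the strict test $b > B[v]$ fails, is exactly the missing argument; and (iii) why the appended pairs are precisely the Pareto-optimal ones. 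One small imprecision: your invariant states that $S[v]$ holds the Pareto pairs ``whose flow value is at most $f_k$,'' but the pair appended at iteration $k$ has flow equal to the maximum bottleneck over $G_k$-shortest paths, which typically exceeds $f_k$; the invariant should be phrased in terms of pairs whose distance component is at most the $G_k$-shortest distance. This does not affect the substance of your argument, which is a strict strengthening of the paper's proof.
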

\begin{proof}
By iterating from $Q[1]$ to $Q[n-1]$ for each maximal flow, we ensure that all vertices are added to the SPT at the minimum possible distance from $s$ for the maximal flow value in the current iteration. When the minimum possible distance is found for a vertex $v$ to be added to the SPT (line \ref{line:D}), all potential parent nodes, $IN(v)$, are inspected to ensure that $v$ is added to the $SPT$ such that the bottleneck from $s$ to $v$ is maximized for the given distance (line \ref{line:B}). Since $D[v]$ is monotonically increasing (lines \ref{line:inc1} and \ref{line:inc2}), $v$ cannot be added to the SPT multiple times at the same distance. Thus any time a vertex $v$ is added to the SPT, $(D[v],B[v])$ can be appended to $S[v]$ as a $(d,f)$ pair. It follows that once we iterate through all maximal flows we have retrieved all relevant $(d,f)$ pairs.\qed
\end{proof}

\begin{lemma}
Algorithm \ref{alg:unit} runs in $O(mn)$ worst case time.
\end{lemma}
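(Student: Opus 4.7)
The plan is to bound separately the three sources of work in Algorithm \ref{alg:unit}: the outer loop over maximal flows, the first inner loop that cuts vertices with insufficient bottleneck from the SPT, and the second inner loop that processes the per-distance queues. The outer loop runs at most $m$ times, since there are at most $m$ distinct maximal flows. The first inner loop simply scans $V$ each iteration and performs $O(1)$ bookkeeping per vertex, contributing $O(n)$ per outer iteration and $O(mn)$ overall.

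The heart of the argument lies in the second inner loop. I would show that each vertex $v$ is removed from some queue $Q[i]$ at most $O(n)$ times across the entire execution. The key invariant is that $D[v]$ is monotonically non-decreasing (it is modified only by the two increments on lines \ref{line:inc1} and \ref{line:inc2}, never reset), and the inner for-loop only processes queues $Q[1],\ldots,Q[n-1]$, so $D[v] \le n-1$ at the moment of any such removal. Between two consecutive removals of the same $v$, $D[v]$ must have strictly increased: if the earlier removal ended with $v$ added to the SPT, then the next removal can only occur after $v$ has been cut by a later execution of the first inner loop, which increments $D[v]$; otherwise the else-branch of the current removal increments $D[v]$ immediately before re-insertion. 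Hence $v$ can be removed at most $n-1$ times in total.

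Each removal of $v$ performs $O(|IN(v)|)$ work in the inner loop over potential parents, so summing yields $\sum_{v \in V}(n-1)\cdot|IN(v)| = O(nm)$. Adding the $O(n)$ overhead per outer iteration for iterating $i$ from $1$ to $n-1$ (hence $O(mn)$ over all outer iterations), together with the $O(mn)$ from the first inner loop and $O(mn)$ worth of $O(1)$ queue insertions, gives the claimed $O(mn)$ total.

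The main obstacle is establishing the strict-increase-of-$D[v]$-between-removals claim, since it requires carefully tracking how a vertex re-enters the queue system across the two inner loops and across different maximal-flow iterations: either by being cut from the SPT in the first inner loop of a later iteration, or by failing to attach to the SPT in the current iteration and falling through to the else-branch. Once that invariant is pinned down, the remaining bounds reduce to routine summation over $V$.
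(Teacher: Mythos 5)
Your proof is correct and follows essentially the same lifetime analysis as the paper's: each vertex is charged $O(n)$ queue removals (one per attainable distance, which you justify via the monotonicity of $D[v]$ and the bound $D[v] \le n-1$), and each removal costs $O(|IN(v)|)$ edge inspections, summing to $\sum_{v}(n-1)|IN(v)| = O(mn)$. You are somewhat more explicit than the paper in pinning down why the per-vertex removal count is bounded by $n-1$, but the decomposition and the key charging argument are the same.
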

\begin{proof}
We perform lifetime analysis to determine the upper bound of Algorithm \ref{alg:unit}. Each vertex $v$ can be cut from the SPT and be re-added to the SPT $O(n)$ times, once per each possible distance from $s$. Cutting/adding $v$ from/to the SPT takes $O(1)$ time, achieved by setting the parent of $v$ to either $NULL$ or $u$, respectively. Therefore the total time complexity of all operations involving the SPT is $O(n^2)$. Also there are a total of $O(n^{2})$ $(d,f)$ pairs. Before each vertex $v$ is added to the SPT all incoming edges $(u,v)$ are inspected. This results in $O(m)$ edges being inspected in total for the entire duration of the algorithm for each possible distance from $s$. Since there are $O(n)$ possible distances from $s$, the total time taken for edge inspection is $O(mn)$. Thus the total worst case time complexity becomes $O(n^{2} + mn) = O(mn)$.\qed
\end{proof}

\begin{theorem}
\label{theorem:int}
There exists an algorithm to solve SSSP-AF on directed graphs with unit edge costs in $O(mn)$ time.
\end{theorem}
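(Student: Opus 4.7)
The plan is to augment Algorithm \ref{alg:unit} so that whenever line \ref{line:append} fires, we also record the current parent of $v$ in the SPT alongside the pair $(D[v], B[v])$. Concretely, each entry in $S[v]$ becomes a triple $(d, f, \pi)$, where $\pi$ is a pointer into the $S$-list of $v$'s parent at the exact moment $v$ was (re)inserted into the SPT. Since each addition of $v$ to the SPT already costs $O(1)$, recording one extra pointer is also $O(1)$, so the $O(mn)$ worst case time bound is preserved.

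To extract an actual shortest path from $s$ to a destination $v$ for a queried flow $f$, locate the appropriate entry $(d, f', \pi)$ in $S[v]$ (the one with smallest $d$ such that $f' \geq f$), follow $\pi$ to obtain the parent vertex $u$ together with the specific triple in $S[u]$ that was active when $v$ was inserted, and iterate until $s$ is reached. Correctness of this walk rests on an invariant already implicit in Algorithm \ref{alg:unit}: at the instant $v$ is appended to $S[v]$ via some parent $u$, vertex $u$ already sits in the SPT with $D[u] = D[v] - 1$ and $B[u] \geq B[v]$, so a consistent ancestor chain of length $D[v]$ back to $s$ exists.

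The storage overhead is $O(1)$ per $(d, f)$ triple, and since the total number of such triples over all vertices is $O(n^2)$ by the lemma bounding Algorithm \ref{alg:unit}'s running time, the extra space and time required for path recording are both $O(n^2) \subseteq O(mn)$. The main subtlety, and the step I expect to need the most care, is that a bare parent vertex label is insufficient: the parent $u$ may itself appear in the SPT at several different distances across the algorithm's execution, so we must pin down which \emph{version} of $u$ (i.e., which entry of $S[u]$) was in force when $v$ was added. Indexing $\pi$ into the parent's $S$-list rather than just storing the vertex handles this cleanly, and the resulting backward traversal yields the shortest path for every relevant flow in time linear in the path length, completing the proof of Theorem \ref{theorem:int}.
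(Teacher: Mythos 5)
Your proposal is correct and takes essentially the same approach as the paper: both augment line~\ref{line:append} to record the predecessor alongside each $(d,f)$ pair and recover any explicit path by walking predecessors backwards, with no change to the $O(mn)$ bound. Your pointer into the parent's $S$-list is a harmless refinement of the ``which version of $u$'' issue; the paper resolves it instead by using the stored distance $d$, since $S[u]$ contains at most one entry at distance $d-1$.
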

\begin{proof}
There are $O(n)$ destination vertices, $O(m)$ maximal flows, and the length of each path is $O(n)$. Therefore storing all explicit paths as a solution to the SSSP-AF problem takes $O(mn^{2})$, which is too expensive. As is common in graph paths algorithms, we work around this problem by storing just the \emph{predecessor} vertex for storing the path information. The predecessor vertex for a path from $s$ to $v$ is the vertex that comes immediately before $v$ on the path.

We extend Algorithm \ref{alg:unit} to store the parent vertex, $u$, alongside the $(d,f)$ pair in line \ref{line:append} i.e. we can extend the $(d,f)$ pair to the $(d,f,u)$ triplet. Then $u$ is the predecessor vertex for the shortest path from $s$ to $v$ that can accommodate flow up to $f$. By using $d$ and $u$, any explicit path can be retrieved by recursively following the predecessor vertices in time linear to the path length. Clearly the additional storage of the predecessor vertex does not increase the worst case time complexity of the algorithm.\qed
\end{proof}

\subsection{Cascading Bucket System}
Before we move onto solving the SSSP-AF problem on graphs with integer edge costs, we review the $k$-level cascading bucket system (CBS) \cite{AMO,DF}. A detailed review of this data structure has also been provided by \citet{Tad12}.

In the $k$-level CBS the key value $d$ is given by:
$$
d = x_{k-1}p^{k-1} + x_{k-2}p^{k-2} + ... + x_1p^1 + x_{0}
$$
where $p$ is the number of buckets (or length) of each level.

Let $i$ be the largest index such that $x_{i}$ is non-zero. Then an element with key of $d$ is inserted into the $x_{i}^{th}$ bucket at level $i$. The values of $x_{i}$ for all $0 \leq i < k$ are calculated only once when an element is inserted, and each insertion takes $O(k)$ time.

The \emph{decrease-key} operation can be performed by removing the element from the CBS in $O(1)$ time, updating the key value, then re-inserting in the same level in $O(1)$ time, or re-inserting at a lower level in $O(l)$ time where $l$ is the difference between the initial level and the new level.

The \emph{delete-min} operation is more involved than the \emph{insert} or the \emph{decrease-key} operations. We maintain an active pointer at each level, $a_{i}$ for all $0 \leq i < k$, such that $a_{i}$ is the minimum index of the non-empty bucket at level $i$. $a_{i} = p$ means level $i$ is empty. To perform the \emph{delete-min} operation, if level 0 is not empty, we simply pick up the minimum non-empty bucket pointed to by $a_{0}$. If level 0 is empty, then we find the lowest non-empty level, $j$, and re-distribute the elements in the $a_{j}^{th}$ bucket into level $j-1$, then re-distribute the elements in the $a_{j-1}^{th}$ bucket into level $j-2$, and so on, until level 0 is non-empty. This process of repeated re-distribution from a higher level down to lower levels is referred to as \emph{cascading}, hence the name for the data structure. Each \emph{delete-min} takes $O(k+p)$ time if $j < k-1$, and $O(k+M/p^{k-1})$ if $j = k - 1$, where $M$ is the maximum key value that the CBS supports.

\begin{figure}
\def\svgwidth{235pt}
\begin{center}
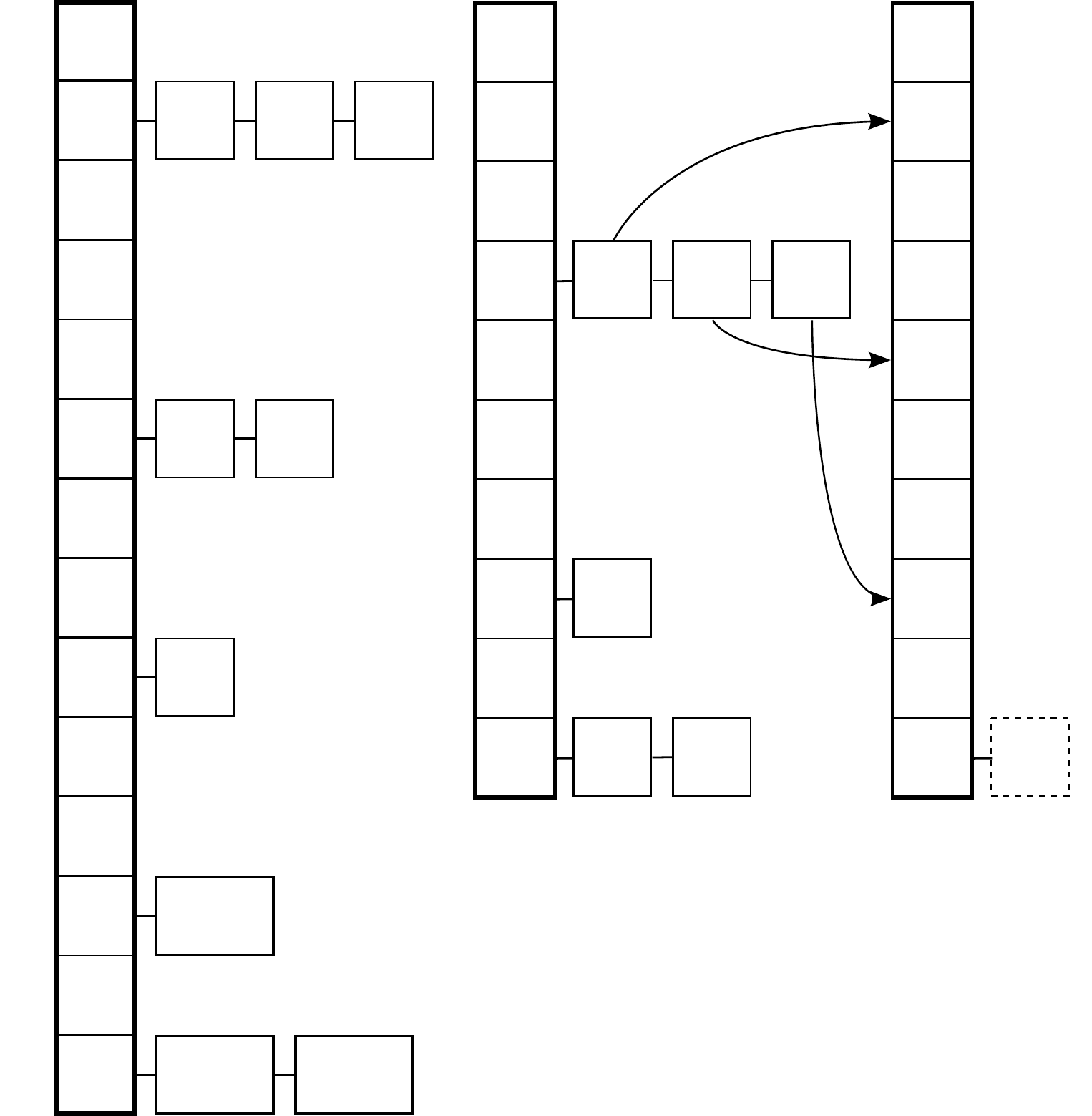
\end{center}
\caption{An example of a cascading bucket system with $k=3$ and $p=10$.}
\label{fig:cbs}
\end{figure}

\begin{example}
Figure \ref{fig:cbs} shows an example of a $3$-level CBS that can support key values up to 1399. $p = 10$ was chosen to make the example easier to understand, since it becomes straightforward to determine the correct bucket for base 10 numbers. If the element with key equal to 19 is removed from the CBS, $a_0$ becomes 10, and the next \emph{delete-min} operation will trigger the \emph{cascading} operation, resulting in elements in $a_{1}$ to be re-distributed into level 0.
\end{example}

\subsection{Integer edge costs}
\label{sec:ssspaf:int}
In Section \ref{sec:ssspaf:unit} we gave an algorithm to solve SSSP-AF on directed graphs with unit edge costs in $O(mn)$ time, that is faster than the straightforward method of $O(m^2)$. In this section we present two algorithms to solve SSSP-AF on directed graphs with non-negative edge costs in $O(m^2 + nc)$ and $O(m^2 + mn\log{(\frac{c}{m})})$ time bounds. Both time bounds are faster than the $O(m^2 + mn\log{\log{c}})$ time bound of the straightforward method for a wide range of values for $c$, $m$ and $n$, and have the added benefit of not relying on a complex data structure that is difficult to implement in real life situations. We note that Algorithm \ref{alg:unit} can also be used to solve SSSP-AF in $O(mnc)$ time since the maximum distance from $s$ for any vertex for any maximal flow value is $O(nc)$. $O(mnc)$ is a comparatively efficient time bound for dense graphs with small $c$.

The two time bounds of $O(m^2 + nc)$ and $O(m^2 + mn\log{(\frac{c}{m})})$ actually both come from the same algorithm, Algorithm \ref{alg:int}, but using different data structures to implement the priority queue. Algorithm \ref{alg:int} is a natural extension to the well known algorithm by \citet{Dijkstra}. For this algorithm we define the triplet $(v,d,f)$, where $v$ is the destination vertex and $d$ and $f$ are equivalent to the $(d,f)$ pair as defined in Section \ref{sec:ssspaf}. We let $D^{f}[v]$ be the current shortest distance from $s$ to $v$ for the maximal flow value of $f$. We let $Q$ be the priority queue for the $(v,d,f)$ triplets with $d$ as the key where the operations performed on $Q$ are \emph{insert}, \emph{decrease-key} and \emph{delete-min}. In summary, all $O(mn)$ $(v,d,f)$ triplets are added to the priority queue, $Q$, then as they are removed one by one, the $(d,f)$ pair is appended to $S[v]$ if the $(d,f)$ pair is \emph{useful} i.e. $f$ is greater than any flow in existing pairs in $S[v]$.

\begin{algorithm}
\caption{Solve SSSD-AF on graphs with integer edge costs}
\label{alg:int}
\begin{algorithmic}[1]
\algnotext{EndFor}
\algnotext{EndIf}
\algnotext{EndWhile}

\ForAll{$v \neq s \in V$}
	\ForAll{maximal flows $f$}
		\State{$D^{f}[v] \leftarrow \infty$}
	\EndFor
\EndFor

\State{$Q \leftarrow$ empty}
\State{Add $(s,0,\infty)$ to $Q$}
\ForAll{$v \neq s \in V$}
	\ForAll{maximal flow $f$}
		\State{Add $(v,\infty,f)$ to $Q$}
	\EndFor
\EndFor

\While{$Q$ is not empty}\label{line:while}
	\State{Delete $(v,d,f)$ with minimum $d$ from $Q$}\label{line:delmin}
	\ForAll{$w \neq s \in OUT(v)$}\label{line:out}
		\State{$f' \leftarrow \Call{min}{f, cap(v,w)}$}
		\State{$d' \leftarrow d + cost(v,w)$}
		\If{$d' < D^{f'}[w]$}
			\State{Let $x = (w,D^{f'}[w],f')$ in $Q$}
			\State{$D^{f'}[w] \leftarrow d'$}
			\State{Put $x$ in the correct position in $Q$}\label{line:update}
		\EndIf
	\EndFor
	\If{$S[v]$ is empty}
		\State{Append $(d,f)$ to $S[v]$}
	\Else
		\State{Let $(d_{0},f_{0}) \leftarrow$ last pair in $S[v]$}
		\If{$f_{0} < f$}
			\If{$d_{0} = d$}
				\State{Delete $(d_{0},f_{0})$ from $S[v]$}
			\EndIf
			\State{Append $(d,f)$ to $S[v]$}
		\EndIf
	\EndIf
\EndWhile

\end{algorithmic}
\end{algorithm}

\begin{definition}
$(d,f)$ in $S[v]$ is correct if $d$ is the shortest distance of a path that can push flows up to $f$ from $s$ to $v$.
\end{definition}

\begin{lemma}
\label{lem:sssdaf}
Algorithm \ref{alg:int} correctly solves SSSD-AF on directed graphs with integer edge costs.
\end{lemma}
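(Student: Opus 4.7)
The plan is to view Algorithm \ref{alg:int} as Dijkstra's algorithm applied to a meta-graph whose vertices are pairs $(v,f)$ of an original vertex and a maximal flow, with an edge from $(v,f)$ to $(w,\min(f,cap(v,w)))$ of weight $cost(v,w)$ for every original edge $(v,w)$. The triplets in $Q$ are tentative distance labels on meta-vertices, and the relaxation step in the inner loop mirrors meta-graph edge relaxation exactly. Every $(v,f)$ combination is explicitly seeded in $Q$ as $(v,\infty,f)$, and the unique source meta-vertex is $(s,\infty)$ with label $0$.

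First I would establish, by induction on the order of \emph{delete-min} operations, that each extracted triplet $(v,d,f)$ satisfies $d$ equal to the shortest-path distance from $(s,\infty)$ to $(v,f)$ in the meta-graph, i.e., the shortest length of an $s$-to-$v$ path whose bottleneck equals exactly $f$. The inductive step is classical: for any $s$-to-$v$ meta-path, locate the last meta-edge crossing from the already-extracted set into the unextracted region; the inductive hypothesis plus the relaxation at its tail guarantee that $D^{f}[v]$ has been reduced to at most the length of that path, so the extracted minimum $d$ cannot exceed it.

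Next I would show that the maintenance of $S[v]$ recovers the Pareto-optimal $(d,f)$ pairs. Because triplets leave $Q$ in nondecreasing order of $d$, candidates are offered to $S[v]$ with nondecreasing distance. The test $f_0 < f$ correctly discards any candidate dominated by a previously appended pair (smaller or equal $d$ together with at least as large $f$), while the replacement branch when $d_0 = d$ enforces the rule to keep the larger flow at a common distance. A short induction on the number of appends confirms the invariant that $S[v]$ always equals the strict Pareto frontier of the triplets extracted so far, which at termination coincides with the correct answer.

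The main obstacle is the bookkeeping mismatch: the algorithm tracks, via $D^{f}[v]$, the shortest distance for bottleneck \emph{exactly} $f$, whereas the problem definition speaks of paths that can push flows up to $f$, i.e., bottleneck \emph{at least} $f$. The two descriptions of the Pareto frontier agree because every path's bottleneck coincides with one of the maximal flow values and every $(v,f)$ meta-vertex is explicitly initialized in $Q$, so no Pareto-optimal candidate is missed; the $S[v]$ pruning then collapses the per-flow data into the correct strict Pareto set. Infeasible $(v,f)$ combinations, whose triplets are eventually extracted with $d=\infty$, are either filtered out by the useful-pair test or excluded by terminating once the queue's minimum key becomes infinite.
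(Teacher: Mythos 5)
Your proposal is correct, and it takes a noticeably different route from the paper's own proof. The paper argues directly by induction on the iterations of the while-loop, maintaining two coupled invariants: that every pair already placed in $S[v]$ is correct, and that for every triplet $(v,d,f)$ still in $Q$ the key $d$ is the length of the shortest path to $v$ pushing $f$ that stays inside the settled set $S$ except at its endpoint; the append step is then justified by the observation that keys leave $Q$ in nondecreasing order. You instead decouple the argument into three cleanly separated pieces: a reduction to ordinary Dijkstra on an explicit meta-graph over pairs $(v,f)$, the classical cut argument showing each extracted label equals the exact meta-distance, and a separate induction showing the $f_0<f$ test together with the $d_0=d$ replacement maintains $S[v]$ as the Pareto staircase. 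Your version buys two things the paper's terse proof elides: it makes explicit why the per-flow labels $D^{f}[v]$, which track the shortest path of bottleneck \emph{exactly} $f$, nevertheless yield the answer for flows \emph{up to} $f$ after the staircase pruning (the paper never acknowledges this mismatch), and it isolates the priority-queue correctness from the output-filtering correctness, so each can be checked independently. The paper's formulation is shorter and stays closer to the pseudocode, but your meta-graph framing is the more transparent and more easily verified argument; the only loose end, shared with the paper, is the treatment of triplets extracted with key $\infty$, which your final remark handles adequately.
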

\begin{proof}
We provide a formal proof by induction. Let $S$ be the set of $(v,d,f)$ such that $S[v]$ contains the pair $(d,f)$, for all $v$, for all $(d,f)$ pairs. Then in the beginning of each iteration:
\begin{enumerate}
\item The set of $(d,f)$ pairs in each $S[v]$ are all correct i.e. all $(v,d,f)$ triplets in $S$ are correct.
\item For any $(v,d,f)$ in $Q$, $d$ is the distance of the shortest path from $s$ to $v$ that can push $f$ only through the path that lies in $S$ except for the end point $v$.
\end{enumerate}
Basis. Before the while-loop begins (line \ref{line:while}) all are correct, and we suppose the theorem is correct at the beginning of some iteration. Then:
\begin{enumerate}
\item Let $(d_{0}, f_{0})$ be the last pair in $S[v]$. Suppose $(d,f)$ is appened to $S[v]$ at the end of the loop. Note that $d$ values are sorted in increasing order in $S[v]$. Since $f$ is appened only when $f > f_{0}$, there can be no shorter path in $S[v]$ that can push $f$. Thus $(d,f)$ is appended as a correct pair.
\item We remove $(v,d,f)$ from $Q$ in line \ref{line:delmin}. The new distance $d'$ and flow $f'$ from $v$ to $w$ are computed for all possible $w$. If $d' < D^{f}[w]$, $(v,d,f)$ now lies in the path from $s$ to $w$, and $(w,d',f')$ is added to $Q$. Since $(v,d,f)$ is added to $S[v]$ at the end of the loop, $d'$ is the distance of the shortest path from $s$ to $w$ that can push $f$ only through the path that lies in $S$ except for the end point $w$.\qed
\end{enumerate}
\end{proof}

\begin{lemma}
\label{lemma:nc}
Algorithm \ref{alg:int} can run in $O(m^{2} + nc)$ worst case time.
\end{lemma}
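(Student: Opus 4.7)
The plan is to bound each kind of priority-queue operation performed by Algorithm~\ref{alg:int} separately and then pick a bucket-based implementation of $Q$ whose per-operation cost is just fast enough to hit the claimed bound.

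First I would count operations. The initialisation seeds $Q$ with one triplet per (vertex, maximal-flow) pair, giving $O(mn)$ inserts and $O(mn)$ delete-mins. For decrease-keys, every pop of $(v,d,f)$ at line~\ref{line:delmin} triggers an iteration over $OUT(v)$ at line~\ref{line:out}; since each vertex $v$ is popped at most once per maximal flow, i.e.\ $O(m)$ times, the total number of edge relaxations (and hence decrease-keys) is $\sum_{v} m \cdot |OUT(v)| = O(m^{2})$.

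Next I would bound the range of keys and pick the data structure. Any finite key $d$ corresponds to a simple $s$-to-$v$ path of at most $n-1$ edges each of cost at most $c$, so $d \leq (n-1)c = O(nc)$. I would then realise $Q$ as a $1$-level CBS with $p = O(nc)$ buckets, equipped with a monotonically non-decreasing pointer tracking the smallest non-empty index. Insertions and decrease-keys both cost $O(1)$ (unlink from the old bucket, splice into the new); delete-mins advance the pointer until a non-empty bucket is found, so the aggregate sweeping cost across the entire run is $O(nc)$ on top of $O(1)$ per extraction.

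Summing gives $O(mn)$ for inserts, $O(m^{2})$ for decrease-keys, and $O(mn + nc)$ for delete-mins, which collapses to $O(m^{2} + nc)$ because $mn = O(m^{2})$ whenever $m \geq n$. The main obstacle I anticipate is handling the $O(mn)$ placeholder triplets that enter $Q$ with $d = \infty$: dropped into the bucket array naively, they would either overflow its range or make the delete-min pointer wade through irrelevant entries. My intended fix is to hold all $\infty$-keyed triplets in a separate unordered list and migrate each one into the bucket array only when its first decrease-key assigns it a finite distance, which preserves the $O(1)$ per-operation cost and keeps the total sweep at $O(nc)$.
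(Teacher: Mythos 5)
Your proposal is correct and follows essentially the same route as the paper: a one-dimensional bucket array over the key range $O(nc)$ with $O(1)$ insert/decrease-key, a single monotone sweep of $O(nc)$ for all delete-mins, and $O(m^{2})$ total edge relaxations, summing to $O(m^{2}+mn+nc)=O(m^{2}+nc)$. Your extra care about parking the $\infty$-keyed placeholder triplets outside the bucket array is a sensible implementation detail that the paper leaves implicit, but it does not change the argument.
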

\begin{proof}
We use a one dimensional bucket system to implement Q. Insert and decrease-key operations can be performed in $O(1)$, resulting in $O(mn)$ time bound for both operations for the whole algorithm. The delete-min operation is performed simply by scanning through $Q$ from $i = 0$ to $nc$ one by one, where $i$ is the distance from $s$. We only have to scan through the distances once, and therefore the time complexity of the delete-min operation for the whole duration of the algorithm is $O(nc)$. Each vertex can be inspected exactly once at each maximal flow value. This means $O(m)$ edge inspections are performed per maximal flow value, resulting in $O(m^2)$ for the whole algorithm. Thus we have $O(m^{2} + mn + nc) = O(m^{2} + nc)$ as the total worst case time complexity of Algorithm \ref{alg:int} using the one dimensional bucket system to implement the priority queue.\qed
\end{proof}

\begin{lemma}
\label{lem:cm}
Algorithm \ref{alg:int} can run in $O(m^{2} + mn\log{(\frac{c}{m})})$ worst case time.
\end{lemma}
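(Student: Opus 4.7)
The plan is to instantiate the priority queue $Q$ with a $k$-level cascading bucket system (CBS) from Section 3.2 rather than the flat bucket array used for Lemma \ref{lemma:nc}, and then tune the parameters $(k,p)$. The non-queue work is unchanged at $O(m^2)$ (the $O(m)$ maximal flows each trigger at most $O(m)$ edge inspections), and, exactly as in Lemma \ref{lemma:nc}, the queue sustains $O(mn)$ insertions, $O(mn)$ delete-mins, and $O(m^2)$ decrease-keys, with those operation counts inherited directly from that proof.

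Next I would invoke the standard ``active-window'' observation for Dijkstra-style algorithms. When $(v,d,f)$ is the current extracted minimum, every subsequently inserted or updated finite key $d'$ satisfies $d'\le d+c$, so at every moment the finite entries of $Q$ lie in an interval of width at most $c$. Hence a CBS of total range $c$ (with the top level reused cyclically as the minimum advances) suffices, instead of one of range $nc$. Plugging in the CBS accounting from Section 3.2, an insert costs $O(k)$, a decrease-key costs $O(1)$ per call plus an $O(k)$ amortized charge per element across all inter-level descents it causes over its lifetime, and the aggregate delete-min cost is $O(mn\cdot k + p^k)$: the first summand bounds the total per-element descent work (each of the $O(mn)$ elements descends through at most $k$ levels) and the second bounds the total scan cost across all cascades (roughly $p$ buckets per cascade and $O(p^{k-1})$ cascades). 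Combined with the edge-inspection term, the total cost is $O\bigl(m^2 + mn\,k + p^k\bigr)$.

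The final step is the parameter balance. Taking $k = \lceil \log(c/m) \rceil$ and $p$ minimally so that $p^k \ge c$ (giving $p \approx c^{1/k}$) turns $mn\,k$ into $O(mn\log(c/m))$, while the cascade-scan term $p^k = O(c)$ is absorbed into $O(mn\log(c/m))$ in the parameter regime for which this bound is meaningful; the complementary regime is already covered by the $O(m^2 + nc)$ bound of Lemma \ref{lemma:nc}. The main obstacle is this parameter balance combined with the underlying CBS amortization: verifying that the total cascade-scan cost really telescopes to $O(p^k)$ rather than $O(k\,p^k)$, and that the $O(k)$ per-element descent charge genuinely absorbs every inter-level move triggered by the $O(m^2)$ decrease-keys, so that $(k,p)$ can be tuned simultaneously to deliver the $\log(c/m)$ factor claimed.
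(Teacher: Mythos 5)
Your overall plan---implement $Q$ with a $k$-level cascading bucket system and tune $(k,p)$---is exactly the paper's approach, and your accounting of the $O(m^2)$ non-queue work, the $O(k)$ inserts, and the $O(k)$ per-element descent charges matches the paper's. The gap is in the cascade/delete-min accounting and the parameter choice it forces. By confining the CBS to the width-$c$ active window you require $p^k\ge c$, and your aggregate delete-min bound then carries a residual scan term of $\Theta(p^k)=\Theta(c)$. That term is \emph{not} absorbed by $O(m^2+mn\log(\frac{c}{m}))$ in general (take $c$ super-polynomial in $m$ and $n$), and your escape hatch---fall back to Lemma~\ref{lemma:nc} in the complementary regime---only yields $O(m^2+nc)$ there, which is weaker than the claimed bound precisely in the regime where you need it (when $c\gg m\log(\frac{c}{m})$). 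So the case split does not prove the lemma as stated. Your choice $p\approx c^{1/k}$ with $k=\log(\frac{c}{m})$ also lets $p$ blow up when $c/m$ is small (e.g.\ $c=2m$ gives $k=1$, $p=c$), which would make the per-delete-min lower-level scanning cost $O(pmn)$---a term your aggregate bound omits---as large as $O(cmn)$.

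The paper avoids both problems by not restricting the structure to a width-$c$ window: the top level spans the full key range $O(nc)$ with $nc/p^{k-1}$ buckets (this is the $O(k+M/p^{k-1})$ delete-min cost quoted in the CBS review), so the total delete-min cost is $O(kmn+pmn+nc/p^{k-1})$. Choosing $p=(\frac{c}{m})^{1/k}$ and $k=\log(\frac{c}{m})$ gives $p=2$ and $nc/p^{k-1}=O(mn)$, so every queue term is $O(mn\log(\frac{c}{m}))$ with no leftover $+c$. If you prefer to keep the width-$c$ window with cyclic reuse, you must bound the top-level scanning by the number of bucket advances over the whole run, $nc/p^{k-1}$, rather than by $p^k$, and then pick $p^{k-1}=\Theta(\frac{c}{m})$ so that this is $O(mn)$---which recovers the paper's parameterization.
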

\begin{proof}
We use $k$-level CBS to implement $Q$. Since there are a total of $O(mn)$ $(v,d,f)$ triplets to be inserted into $Q$, the total time complexities for operations performed on $Q$ are: $O(kmn)$ for \emph{insert}, $O(kmn)$ for \emph{decrease-key}, and $O(kmn + pmn + cn/p^{k-1})$ for \emph{delete-min}, where p is the length of each level. We choose $p = (\frac{c}{m})^{1/k}$ and $k = \log{(\frac{c}{m})}$ to implement our $k$-level CBS. Then the term $O(kmn)$ dominates and the total time complexity for all three operations involving $Q$ becomes $O(mn\log{(\frac{c}{m})})$. As shown in the proof of Lemma \ref{lemma:nc}, the total time taken for edge inspection is $O(m^{2})$, resulting in the total worst case time complexity of $O(m^{2} + mn\log{(\frac{c}{m})})$ for Algorithm \ref{alg:int} using the $k$-level CBS to implement the priority queue.\qed
\end{proof}

\begin{theorem}
There exists algorithms to solve the SSSP-AF problem on directed graphs with integer edge costs in $O(m^{2} + nc)$ or $O(m^{2} + mn\log{(\frac{c}{m})})$ time bounds.
\end{theorem}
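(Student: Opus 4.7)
The plan is to reduce the SSSP-AF claim to the already-established SSSD-AF time bounds from Lemmas~\ref{lemma:nc} and~\ref{lem:cm}, using exactly the same predecessor-storage trick that promoted Algorithm~\ref{alg:unit} from distances to paths in Theorem~\ref{theorem:int}. Storing every explicit path is prohibitive ($O(mn^{2})$ space and time), so instead I would augment each $(d,f)$ pair eventually appended to $S[v]$ with a single predecessor vertex $u$, giving a $(d,f,u)$ triple from which any requested path can be reconstructed in time linear in its length by following predecessors back to~$s$.

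Concretely, I would extend the priority queue entries of Algorithm~\ref{alg:int} from triples $(v,d,f)$ to quadruples $(v,d,f,u)$, where $u$ is the parent of $v$ on the current best candidate path witnessing $D^{f}[v]=d$. Whenever the relaxation step at line~\ref{line:update} discovers $d'<D^{f'}[w]$ for an outgoing edge $(v,w)$, I update the predecessor field of the corresponding queue entry for $w$ to $v$ at the same time as the decrease-key. When a triple $(v,d,f)$ is finally extracted by the delete-min, its stored predecessor is simply copied into the appended pair, producing $(d,f,u)$ in $S[v]$.

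Correctness of the predecessor bookkeeping follows directly from the invariant proved for Lemma~\ref{lem:sssdaf}: at the moment $(v,d,f)$ is removed from $Q$, the shortest witnessing path of length $d$ with flow capability $f$ consists of the edge from the recorded predecessor $u$ into $v$, concatenated with a path from $s$ to $u$ whose description already lies in $S$. A recursive lookup therefore yields a genuine $(s,v)$-path, and by induction the path pushes flow at least $f$ and has length exactly $d$. The extra work per queue operation and per append is $O(1)$, and the extra storage is one vertex label per queue element and per entry of $S[v]$, so neither the $O(m^{2}+nc)$ bound from Lemma~\ref{lemma:nc} nor the $O(m^{2}+mn\log(c/m))$ bound from Lemma~\ref{lem:cm} is disturbed.

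The only subtle point I anticipate is ensuring that the predecessor field travels with its element through the CBS during decrease-key and cascading, rather than being associated with a bucket position; since the CBS operations relocate elements by identity rather than rebuilding them, this is straightforward but worth stating explicitly. With that observation in place, both time bounds for SSSD-AF lift verbatim to SSSP-AF, giving the theorem.
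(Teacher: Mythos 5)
Your proposal is correct and follows essentially the same route as the paper: the paper's proof likewise stores $v$ as the predecessor alongside the $(w,d',f')$ triplet at line~\ref{line:update} of Algorithm~\ref{alg:int}, so that each entry of $S[v]$ becomes a $(d,f,u)$ triple from which paths are reconstructed by following predecessors, with no change to the time bounds of Lemmas~\ref{lemma:nc} and~\ref{lem:cm}. Your additional remarks on the correctness invariant and on the predecessor field travelling with its element through the CBS are sound elaborations of details the paper leaves implicit.
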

\begin{proof}
We take the same approach as discussed in the proof of Theorem \ref{theorem:int}. In line \ref{line:update} of Algorithm \ref{alg:int}, we store $v$ as the predecessor vertex alongside the $(w,d',f')$ triplet.\qed
\end{proof}

Note that in the proof of Lemma \ref{lem:cm} we could have chosen $k = O(\log{(\frac{c}{m})}/\log{\log{(\frac{c}{m})}})$ to speed up the algorithm by a polylog factor. We also note that if $c = 1$ then Algorithm \ref{alg:int} has the time complexity of $O(m^{2})$, thus Algorithm \ref{alg:unit} has not been made redundant by Algorithm \ref{alg:int}.


\section{All Pairs Shortest Paths for All Flows}
The key achievement in this paper is not to come up with an original data structure but to devise algorithms to successfully utilize existing well known data structures, based on the observation that the maximum distance of any simple path on a graph with integer edge costs bounded by $c$ is $O(nc)$. From this observation what we have effectively achieved is to find a method to \emph{share resources}. That is, instead of having to repeatedly scan over $O(nc)$ distances for solving SSSP for each maximal flow value, we solve SSSP for all maximal flows at the same time while sharing the common resource, $Q$, thereby allowing us to scan $O(nc)$ only once. \citet{Tad12} used a similar idea to achieve $O(mn + n^{2}\log{(\frac{c}{n})})$ for the APSP problem. In this section we further extend the idea of sharing common resources to solve the problem of APSP-AF.

We let $D^{f}[u][v]$ be the currently known shortest distance from vertex $u$ to vertex $v$ for maximal flow $f$. We extend the $(v,d,f)$ triplet that was defined in Section \ref{sec:ssspaf:int} to the quadruple $(u,v,d,f)$, where $u$ and $v$ are the starting and the ending vertices of a possible path, respectively. Then we can extend Algorithm \ref{alg:int} to solve the APSD-AF problem, as shown in Algorithm \ref{alg:ap}

\begin{algorithm}
\caption{Solve APSD-AF on graphs with integer edge costs}
\label{alg:ap}
\begin{algorithmic}[1]
\algnotext{EndFor}
\algnotext{EndIf}
\algnotext{EndWhile}

\ForAll{$(u,v) \in V \times V$}
	\ForAll{maximal flows $f$}
		\If{u = v}
			\State{$D^{f}[u][v] \leftarrow 0$}
		\Else
			\State{$D^{f}[u][v] \leftarrow \infty$}
		\EndIf
	\EndFor
\EndFor

\State{$Q \leftarrow$ empty}
\ForAll{$v \in V$}
	\State{Add $(v,v,0,\infty)$ to $Q$}
\EndFor

\ForAll{$(u,v) \in V \times V$ such that $u \neq v$}
	\ForAll{maximal flow $f$}
		\State{Add $(u,v,\infty,f)$ to $Q$}
	\EndFor
\EndFor

\While{$Q$ is not empty}
	\State{Delete $(u,v,d,f)$ with minimum $d$ from $Q$}
	\ForAll{For all $w \neq u \in OUT(v)$}
		\State{$f' \leftarrow \Call{min}{f, cap(v,w)}$}
		\State{$d' \leftarrow d + cost(v,w)$}
		\If{$d' < D^{f'}[u][w]$}
			\State{Let $x = (u,w,D^{f'}[u][w],f')$ in $Q$}
			\State{$D^{f'}[u][w] \leftarrow d'$}
			\State{Put $x$ in the correct position in $Q$}\label{line:ap:update}
		\EndIf
	\EndFor
	\If{$S[u][v]$ is empty}
		\State{Append $(d,f)$ to $S[u][v]$}
	\Else
		\State{Let $(d_{0},f_{0}) \leftarrow$ last pair in $S[u][v]$}
		\If{$f_{0} < f$}
			\If{$d_{0} = d$}
				\State{Delete $(d_{0},f_{0})$ from $S[u][v]$}
			\EndIf
			\State{Append $(d,f)$ to $S[u][v]$}
		\EndIf
	\EndIf
\EndWhile

\end{algorithmic}
\end{algorithm}

\begin{lemma}
Algorithm \ref{alg:ap} correctly solves APSD-AF on directed graphs with integer edge costs.
\end{lemma}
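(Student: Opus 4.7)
The plan is to adapt the inductive argument of Lemma \ref{lem:sssdaf} to the all-pairs setting. Let $S$ now denote the set of quadruples $(u,v,d,f)$ such that $S[u][v]$ contains $(d,f)$. We maintain two loop invariants at the beginning of each iteration of the while-loop: (i) every $(u,v,d,f)\in S$ is correct, meaning $d$ is the shortest distance of a path from $u$ to $v$ that can push flow up to $f$; and (ii) for every $(u,v,d,f)$ currently in $Q$, $d$ is the distance of the shortest path from $u$ to $v$ of flow value $f$ whose intermediate vertices all lie in paths already recorded in $S$ (i.e.\ the prefix from $u$ up to, but not including, $v$ corresponds to entries already committed to $S$).

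The base case is immediate from the initialization: each $S[u][v]$ is empty, and $Q$ contains only the trivial triples $(v,v,0,\infty)$ and the placeholder quadruples $(u,v,\infty,f)$ whose $d=\infty$ vacuously satisfies invariant (ii). For the inductive step, suppose the invariants hold and we extract the minimum quadruple $(u,v,d,f)$ in line corresponding to \textsc{delete-min}. Exactly as in Lemma \ref{lem:sssdaf}, since $d$ values are appended to $S[u][v]$ in nondecreasing order and a pair is appended only when its flow strictly exceeds the last flow recorded, any appended $(d,f)$ is correct: no shorter path of flow $\geq f$ could exist, otherwise it would have been extracted earlier from $Q$. This preserves invariant (i). Then for invariant (ii), the relaxation loop over $w\in OUT(v)$ considers every way to extend the just-committed path from $u$ to $v$ by one more edge $(v,w)$, updates $D^{f'}[u][w]$ and the corresponding quadruple in $Q$ when a strictly shorter distance is found, so invariant (ii) is re-established with $(u,v,d,f)$ now part of $S$.

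The only genuinely new ingredient compared with Lemma \ref{lem:sssdaf} is the argument that sharing a single priority queue $Q$ across all source vertices $u$ does not cause interference. This follows because the relaxation at a popped quadruple $(u,v,d,f)$ only reads and writes entries indexed by the same source $u$ (i.e.\ $D^{f'}[u][w]$ and $S[u][v]$), so the sub-computation for each $u$ is logically independent of the others; the global min-$d$ extraction order still ensures that, restricted to any fixed source $u$, quadruples are popped in nondecreasing $d$, which is precisely the property needed for the Dijkstra-style correctness argument of Lemma \ref{lem:sssdaf}. The condition $w \neq u$ in the relaxation loop prevents degenerate self-loops back to the source from affecting $S[u][u]$, which was initialized to the correct pair $(0,\infty)$.

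The hard part will be making the independence-across-sources argument watertight, since the queue is shared and one might worry that relaxing an edge from a quadruple with source $u$ could somehow affect a quadruple with source $u'\neq u$; isolating the read/write footprint of each relaxation to a single source, and noting that the global min-$d$ order implies the per-source min-$d$ order, are the key observations that make this straightforward reduction to the single-source proof go through. Everything else is a routine mechanical lift of Lemma \ref{lem:sssdaf} from triples to quadruples.\qed
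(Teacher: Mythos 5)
Your proof is correct and takes essentially the same route as the paper, which simply asserts that lifting Lemma \ref{lem:sssdaf} from triples $(v,d,f)$ to quadruples $(u,v,d,f)$ "clearly has no impact" on correctness. You go further by explicitly verifying the one point the paper leaves implicit — that sharing a single priority queue across all sources causes no interference, because each relaxation reads and writes only entries indexed by its own source $u$ and the global min-$d$ extraction order restricts to a per-source min-$d$ order — so your write-up is a faithful (and somewhat more careful) version of the paper's argument.
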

\begin{proof}
Essentially the same argument as the proof of Lemma \ref{lem:sssdaf} can be applied. The only differences are that we now have quadruples $(u,v,d,f)$ instead of triplets in $Q$ and $S$, and we need to go through more iterations as we are solving for all $O(n^{2})$ pairs of vertices. Clearly these differences has no impact on the correctness of the algorithm.\qed
\end{proof}

\begin{lemma}
Algorithm \ref{alg:ap} can run in $O(m^{2}n + nc)$ worst case time.
\end{lemma}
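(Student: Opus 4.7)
The plan is to mirror the analysis of Lemma \ref{lemma:nc}, implementing $Q$ as a one-dimensional bucket array indexed by distance from $0$ to $O(nc)$ (the maximum possible length of any simple path in $G$, since costs are integers bounded by $c$).

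First, I would count operations on $Q$. There are $O(mn^{2})$ quadruples $(u,v,d,f)$ ever inserted, each insertion costing $O(1)$ in a bucket array, for $O(mn^{2})$ total. A \emph{decrease-key} on the same structure is also $O(1)$, and its total count is bounded by the total number of successful edge relaxations performed inside the main while-loop.

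Next, for the edge-inspection cost I would observe that, for each fixed source $u$ and each fixed maximal flow $f$, every vertex $v$ is popped from $Q$ at most once (as in plain Dijkstra), so $O(m)$ edges are inspected per source--flow pair. Multiplying by $O(m)$ maximal flows and $O(n)$ source vertices gives $O(m^{2}n)$ inspections in total, which simultaneously upper-bounds the total decrease-key work.

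For \emph{delete-min} I would invoke the shared-resource observation made at the opening of the APSP-AF section: because delete-min always returns the globally smallest $d$, and $Q$ is a single global structure regardless of how many source-indexed Dijkstra computations are interleaved, one monotonically advancing pointer can sweep the buckets $0,1,\ldots,O(nc)$ exactly once over the entire execution. The sweep itself contributes $O(nc)$, and each of the $O(mn^{2})$ element removals costs $O(1)$, so delete-min totals $O(nc + mn^{2})$. Adding the three contributions yields $O(m^{2}n + mn^{2} + nc)$; since $m = \Omega(n)$ under the usual connectivity assumption, $mn^{2}$ is absorbed into $m^{2}n$ and the claimed bound $O(m^{2}n + nc)$ follows. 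The only delicate step is justifying the single shared sweep across all $n$ sources; this is sound precisely because the bucket pointer is driven by the global minimum of $d$ and therefore never needs to move backwards, even though $n$ independent single-source computations are sharing $Q$.
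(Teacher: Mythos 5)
Your proposal is correct and follows essentially the same route as the paper: a one-dimensional bucket array for $Q$, $O(mn^{2})$ constant-time inserts and decrease-keys, a single monotone $O(nc)$ sweep for delete-min, and $O(m^{2}n)$ total edge inspections, giving $O(m^{2}n + mn^{2} + nc) = O(m^{2}n + nc)$. Your extra remarks --- justifying the single shared bucket sweep across all sources via monotonicity of the global minimum, and noting that absorbing $mn^{2}$ into $m^{2}n$ uses $m = \Omega(n)$ --- make explicit two points the paper leaves implicit, but the argument is the same.
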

\begin{proof}
There are $O(n^{2})$ pairs of vertices resulting in $O(mn^{2})$ $(u,v,d,f)$ quadruples. Each vertex pair can be observed exactly once at each maximal flow value hence the number of edge inspections that occur at one maximal flow value is $O(mn)$, resulting in the total time bound of $O(m^{2}n)$ for edge inspections. Using the one dimensional bucket system to implement $Q$, we have $O(m^{2}n + mn^{2} + nc) = O(m^{2}n + nc)$ as the worst case time complexity.\qed
\end{proof}

\begin{lemma}
Algorithm \ref{alg:ap} can run in $O(m^{2}n + mn^{2}\log{(\frac{c}{mn})})$ worst case time.
\end{lemma}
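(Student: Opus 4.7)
The plan is to mirror the argument in the proof of Lemma \ref{lem:cm}, adjusting the counts to reflect that we are now dealing with quadruples $(u,v,d,f)$ rather than triplets, which blows up the number of elements stored in $Q$ by a factor of $n$. First I would observe that there are $O(n^2)$ source-destination pairs and $O(m)$ maximal flows, so the total number of quadruples ever inserted into $Q$ is $O(mn^2)$. The maximum key (distance) stored in $Q$ is still $O(nc)$, since any simple path has at most $n-1$ edges, each of cost at most $c$; this bound does not change when we move from SSSP-AF to APSP-AF because it depends only on individual path length, not on how many sources we track in parallel.

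Next I would instantiate the $k$-level CBS as the priority queue. With $N = mn^2$ elements and maximum key $M = nc$, the same per-operation cost accounting as in Lemma \ref{lem:cm} yields $O(kN)$ for all inserts, $O(kN)$ for all decrease-keys, and $O(kN + pN + M/p^{k-1}) = O(kmn^2 + pmn^2 + nc/p^{k-1})$ for all delete-mins. Then I would choose $p = (c/(mn))^{1/k}$ and $k = \log(c/(mn))$, so that $p = 2$ is a constant; substituting gives $pN = O(mn^2)$ and $M/p^{k-1} = nc / (2^{k-1}) = O(mn^2)$, both dominated by $kN = O(mn^2 \log(c/(mn)))$. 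Thus the total cost for all queue operations is $O(mn^2 \log(c/(mn)))$.

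Finally, I would account for the edge inspections carried out in the outer \textbf{forall} over $OUT(v)$. Each of the $O(n^2)$ source-destination pairs is processed at each maximal flow value, and each processing inspects $O(m)$ outgoing edges, giving $O(mn)$ edge inspections per maximal flow and $O(m^2 n)$ in total, exactly as in the preceding APSD-AF bucket-based lemma. Adding the two contributions yields $O(m^2 n + mn^2 \log(c/(mn)))$.

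The only step that requires any care is verifying that the max-key bound $M = O(nc)$ is still correct despite the much larger population in $Q$; once that is observed, the rest is a direct transcription of Lemma \ref{lem:cm} with $mn$ replaced by $mn^2$ in the element count and $c/m$ replaced by $c/(mn)$ in the parameter choice. No new data-structural ideas are needed.
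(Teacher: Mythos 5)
Your proposal is correct and follows essentially the same route as the paper: instantiate the $k$-level cascading bucket system with $O(mn^2)$ elements and maximum key $O(nc)$, choose $p = (c/(mn))^{1/k}$ and $k = \log(c/(mn))$, and add the $O(m^2 n)$ edge-inspection cost. Your version is in fact slightly more careful than the paper's (which leaves the insert/decrease-key accounting implicit and contains a typo, $mn$ for $mn^2$, in its final line), but no new ideas are introduced.
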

\begin{proof}
We use the $k$-level cascading bucket system to implement $Q$, where the \emph{delete-min} operation now takes $O(kmn^{2} + pmn^{2} + cn/p^{k-1})$ time. We choose $p = (\frac{c}{mn})^{1/k}$ and $k = \log{(\frac{c}{mn})}$ for the total time complexity of the \emph{delete-min} operation to become $O(mn^{2}\log{(\frac{c}{mn})})$. Thus the total worst case time complexity using the cascading bucket system is $O(m^{2}n + mn\log{(\frac{c}{mn})})$.\qed
\end{proof}

\begin{theorem}
There exists algorithms to solve the APSP-AF problem on directed graphs with integer edge costs in $O(m^{2}n + nc)$ or $O(m^{2}n + mn^{2}\log{(\frac{c}{mn})})$ time bounds.
\end{theorem}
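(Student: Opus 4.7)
The plan is to mirror the argument used for Theorem \ref{theorem:int}, which lifted SSSD-AF to SSSP-AF at no asymptotic cost. The two preceding lemmas already establish the time bounds $O(m^{2}n + nc)$ and $O(m^{2}n + mn^{2}\log{(\frac{c}{mn})})$ for Algorithm \ref{alg:ap}, but only for the distances-only problem APSD-AF. So the only remaining step is to recover the actual paths within those same bounds.

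First I would note that writing out explicit paths is infeasible: there are $O(n^{2})$ source-destination pairs, $O(m)$ distinct maximal flows, and each path can have length $O(n)$, so a straightforward output would be $\Theta(mn^{3})$ in the worst case, dominating both target bounds. The workaround is the standard predecessor-vertex trick, already used in the proof of Theorem \ref{theorem:int}: store, alongside each $(d,f)$ pair in $S[u][v]$, the immediate predecessor $p$ of $v$ on the shortest $u$-to-$v$ path achieving flow $f$, converting entries into $(d,f,p)$ triples. From these, any explicit path can be reconstructed in time linear in its length by chasing predecessors from $v$ back to $u$.

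Next I would push this augmentation through Algorithm \ref{alg:ap}. Concretely, I would extend each quadruple in $Q$ to a quintuple $(u,v,d,f,p)$, and at line \ref{line:ap:update}, where the algorithm relaxes the edge $(v,w)$ on behalf of source $u$ with flow $f'$ and distance $d'$, I would simultaneously set the predecessor field of the updated entry to $v$. When such an entry is later dequeued and appended to $S[u][w]$, the accompanying predecessor is exactly the correct vertex immediately preceding $w$ on the path being recorded.

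The verification that time complexity is preserved is essentially routine: each predecessor is written in $O(1)$ time together with the corresponding $d$ and $f$, so neither the $O(mn^{2})$ priority-queue maintenance work nor the $O(m^{2}n)$ edge-inspection work is inflated, and the priority-queue analysis in the proofs of Lemmas 3.4--3.5 (applied to the one-dimensional bucket system or the $k$-level CBS, respectively, as in Algorithm \ref{alg:ap}) goes through unchanged. The only potential subtlety—and the sole place where care is needed—is ensuring that the predecessor field is always rewritten in lockstep with $d$ and $f$ in line \ref{line:ap:update}, so that what finally gets appended to $S[u][v]$ is a consistent triple; this is immediate from the augmented relaxation step. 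Both claimed bounds therefore follow.
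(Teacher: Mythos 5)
Your proposal is correct and matches the paper's own proof: the paper likewise invokes the predecessor-vertex trick from Theorem \ref{theorem:int} and stores $v$ as the predecessor alongside the quadruple at line \ref{line:ap:update} of Algorithm \ref{alg:ap}, with the time bounds inherited from the preceding lemmas. Your additional remarks on why explicit path storage is infeasible and why the augmentation preserves the bounds are consistent with, and slightly more detailed than, the paper's one-line argument.
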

\begin{proof}
We take the same approach as before and modify Algorithm \ref{alg:ap} to store $v$ as the predecessor vertex alongside the $(u,w,d',f')$ quadruple in line \ref{line:ap:update}.\qed
\end{proof}


\section{Concluding remarks}
We have introduced a new graph path problem and provided non-trivial algorithms to solve the new problem that are both practical and faster than the straightforward methods.

The example of evacuation planning has been used in the introduction of the paper to show the relevance of the SP-AF problem in real life. Another possible application of the SSSP-AF problem is in computer networking, as a more sophisticated dynamic routing protocol than the currently commonly used protocols such as RIP and OSPF. And with the introduction of Software Defined Networking (SDN) \cite{ONF}, we can also propose APSP-AF as a possible algorithm to calculate the routes in the entire network.

Trivial lower bounds of $O(mn)$ and $O(mn^{2})$ exist for the SSSP-AF and APSP-AF problems, respectively, on weighted digraphs. This paper has investigated only graphs with integer edge costs. Can we provide a better time bound than the straightforward $O(m^{2} + mn\log{n})$ for the SSSP-AF problem on directed graphs with real edge costs? Is there a faster algorithm on undirected graphs? How close can we get to the lower bounds of the SP-AF problems? We conclude the paper with these open questions and look forward to further research that may address these open problems.

\bibliographystyle{agsm}    


\end{document}